\let \oldsection \section
\renewcommand{\section}{\vspace{8pt}\oldsection}
\newcommand{\ket}[1]{{|{#1}\rangle}}
\newcommand{\bra}[1]{{\langle{#1}|}}
\newcommand{\C}{\mathbb{C}}
\newcommand{\cA}{\mathcal{A}}
\newcommand{\cD}{\mathcal{D}}
\newcommand{\cH}{\mathcal{H}}
\newcommand{\cK}{\mathcal{K}}
\newcommand{\cR}{\mathcal{R}}
\newcommand{\cX}{\mathcal{X}}
\newcommand{\zo}[1]{\{0,1\}^{#1}}
\DeclareMathOperator*{\Exp}{\mathbb{E}} % thin space, limits underneath in displays
\DeclareMathOperator{\Span}{\operatorname {span}}
\DeclareMathOperator{\poly}{\operatorname {poly}}
\DeclareMathOperator{\negl}{\operatorname {negl}}
\DeclareMathOperator{\PRP}{\operatorname {PRP}}
\newtheorem{theorem}{Theorem}[section]
\newtheorem{corollary}[theorem]{Corollary}
\newtheorem{lemma}[theorem]{Lemma}
\newtheorem{fact}[theorem]{Fact}
\newtheorem{definition}[theorem]{Definition}
\newtheorem{prop}[theorem]{Proposition}
\numberwithin{equation}{section}
\newcommand{\trace}{\mathrm{Tr}}
\title{Pseudorandom and Pseudoentangled States from Subset States}
\newcommand\thankssymb[1]{\textsuperscript{\@fnsymbol{#1}}}
\author{Fernando Granha Jeronimo\thankssymb{1}}
\thanks{\thankssymb{1} {\tt IAS \& Simons Institute}. {\tt granha@ias.edu}. Supported as a Google Research Fellow.} 
\author{Nir Magrafta\thankssymb{2}}
\thanks{\thankssymb{2} {\tt Weizmann Institute of Science}. {\tt nir.magrafta@weizmann.ac.il}. Supported by the Israel Science Foundation (Grant No.\ 3426/21), and by the European Union Horizon 2020 Research and Innovation Program via ERC Project REACT (Grant 756482)}
\author{Pei Wu\thankssymb{3}}
\thanks{\thankssymb{3} {\tt Weizmann Institute of Science}. {\tt pei.wu@weizmann.ac.il}.  Supported by ERC Consolidator Grant VerNisQDevS (101086733).}
\begin{document}

\maketitle

\begin{abstract}
Pseudorandom states (PRS) are an important primitive in quantum cryptography. In this paper, we show that \emph{subset states} can be used to construct PRSs. A subset state with respect to $S$, a subset of the computational basis, is 
\[
    \frac{1}{\sqrt{|S|}}\sum_{i\in S} |i\rangle.
\] 
As a technical centerpiece, we show that for any fixed subset size $|S|=s$ such that $s = 2^n/\omega(\poly(n))$ and $s=\omega(\poly(n))$, where $n$ is the number of qubits, a random subset state is information-theoretically indistinguishable from a Haar random state even provided with polynomially many copies.
This range of parameter is tight. Our work resolves a conjecture by Ji, Liu and Song~\cite{ji2018pseudorandom}. 
Since subset states of small size have small entanglement  across all cuts, this construction also illustrates a pseudoentanglement phenomenon.
\end{abstract}

\section{Introduction}

Pseudorandom quantum states (PRS) are a keyed family of quantum states that can be efficiently generated and are computationally indistinguishable from Haar random states, even when provided with polynomially many copies.
PRSs have a wide range of applications including but not limited to statistically binding quantum bit commitments \cite{morimae2022quantum} and private quantum coins \cite{ji2018pseudorandom}. Notably, for certain applications like private quantum coins, PRSs represent the weakest primitive known to imply them.
Moreover, PRSs imply other quantum pseudorandom objects, such as one-way state generators (OWSGs) \cite{ji2018pseudorandom,morimae2022one} and EFI pairs (\textbf{e}fficiently samplable, statistically \textbf{f}ar but computationally \textbf{i}ndistinguishable pairs of quantum states) \cite{brakerski2022computational,morimae2022quantum}.
Although all the existing PRS constructions rely on quantum-secure pseudorandom functions (PRFs) or pseudorandom permutations (PRPs),  PRSs may be weaker than PRFs~\cite{kretschmer2023quantum}.

Since the initial proposal of pseudorandom states~\cite{ji2018pseudorandom}, various constructions have been investigated. Ji et al. pioneered by constructing PRS with complex phases over the uniform superposition~\cite{ji2018pseudorandom}. Subsequently, it is proved that using only $\pm 1$ phases is sufficient for PRS construction~\cite{brakerski2019pseudo, ananth23binary}. Varying magnitudes as well as phases led to scalable PRS~\cite{brakerski2020scalable}, PRS with provable high and low entanglement entropy \cite{aaronson2024quantum} and PRS with proofs of destruction~\cite{behera2023pseudorandomness}. 

Randomizing the phase was essential in the security proofs for all of these constructions.
It is then natural to ask if it is possible to construct PRS without varying the phases. Indeed, \cite{ji2018pseudorandom} have conjectured that random \emph{subset states} can be used to construct PRS.

A closely related notion to the PRSs is that of \emph{pseudoentangled} states studied recently by~\cite{aaronson2024quantum}. Here we call a PRS $h(n)$-pseudoentangled if for any state $\ket\phi$ from the PRS, $\ket\phi$ in addition satisfies that its entanglement entropy across all cut is $O(h(n)).$\footnote{In \cite{aaronson2024quantum}, another notion was considered. Roughly speaking, a pseudoentangled state ensemble consists of two efficiently constructible and computationally indistinguishable ensembles of states which display a gap in their entanglement entropy across all cuts.} Note that for a Haar random state, the entanglement entropy is near maximal across all cuts.
It's observed in~\cite{aaronson2024quantum} that a subset state with respect to set $S$ has entanglement entropy at most $O(\log |S|)$ across any cut for some function $h(n):\mathbb{N}\to\mathbb{N}$, since the Schmidt rank of a subset state is at most $|S|$ across any cut. Therefore the subset states with small set size are good candidates for pseudoentangled states, which they left as an open problem.

\subsection*{Our results} In this paper, we settle the conjecture and prove that random subset states and Haar random states are information-theoretically indistinguishable, even with polynomially many copies:
\begin{theorem}\label{thm:subset-state-PRS}
    Let $\cH=\C^d$ be a Hilbert space of dimension $d \in \mathbb{N}$, $\mu$ be the Haar measure on $\mathcal{H}$, and $S\subseteq [d]$ of size $s$. Then for any $k\in \mathbb{N}$,
    \begin{align}
        \left\| \int{\psi^{\otimes k} d\mu(\psi)} - \mathop{\mathbb{E}}_{S\subseteq[d], |S|=s} \phi_S ^{\otimes k} \right\|_1 \le O\left(\frac{k^2}{d} + \frac{k}{\sqrt{s}} + \frac{s k}{d}\right),
    \end{align}
where 
\[
    \phi_S =  \left(\frac{1}{\sqrt{s}}\sum_{i\in S}\ket i\right)\left(\frac{1}{\sqrt{s}}\sum_{i\in S} \bra i \right).
\]
\end{theorem}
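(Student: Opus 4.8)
The plan is to compute both $k$-copy operators explicitly on the symmetric subspace $\mathrm{Sym}^k(\C^d)$ and compare them there. For the Haar side, $\int \psi^{\otimes k}\,d\mu(\psi) = \binom{d+k-1}{k}^{-1}\Pi_{\mathrm{sym}}$ is the maximally mixed state on $\mathrm{Sym}^k(\C^d)$. For the subset side, expanding $\phi_S^{\otimes k}$ and averaging over $S$, the matrix element $\langle i_1\cdots i_k|\,\mathbb{E}_S\,\phi_S^{\otimes k}\,|j_1\cdots j_k\rangle$ depends only on the number $t$ of distinct values appearing among $i_1,\dots,i_k,j_1,\dots,j_k$, and equals $s^{-k}\,(s)_t/(d)_t$, where $(x)_t = x(x-1)\cdots(x-t+1)$ is the falling factorial. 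Both operators are supported on $\mathrm{Sym}^k(\C^d)$, so I would pass to the orthonormal ``type'' basis $\{\ket M\}$ of that space, indexed by multisets $M$ of size $k$ from $[d]$.

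The first reduction is to the collision-free subspace. Let $\Pi_{\mathrm{dist}}$ project onto the span of those $\ket M$ whose $k$ entries are distinct (dimension $\binom dk$). On the Haar side, $\Pi_{\mathrm{dist}}$ captures a $\binom dk/\binom{d+k-1}{k} = 1 - O(k^2/d)$ fraction of the flat spectrum, so discarding the rest costs $O(k^2/d)$ in trace norm---this is the birthday term. On the subset side, the mass off the collision-free subspace is $1 - (s)_k/s^k = O(k^2/s)$; moreover, writing $\mathbb{E}_S\,\phi_S^{\otimes k}$ in blocks relative to $\Pi_{\mathrm{dist}}$ and using that it is positive semidefinite, the off-diagonal block $B$ obeys $\|B\|_1 \le \sqrt{\mathrm{tr}(G)\,\mathrm{tr}(C)} = O(k/\sqrt s)$, where $G,C$ are the diagonal blocks. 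This is the source of the $k/\sqrt s$ term, and it reduces the problem to comparing the two compressions $\Pi_{\mathrm{dist}}(\cdot)\,\Pi_{\mathrm{dist}}$.

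On the collision-free subspace the Haar compression is simply $\binom{d+k-1}{k}^{-1}\Pi_{\mathrm{dist}}$, a scalar multiple of the identity. The subset compression $G$ has uniform diagonal $c_0 = k!\,(s)_k/(s^k (d)_k)$, which matches the Haar scalar up to a multiplicative factor $1 \pm O(k^2/s + k^2/d)$ and hence contributes only lower-order terms; its off-diagonal entry between two $k$-sets $M,M'$ differing in exactly $r$ elements equals $c_r = k!\,(s)_{k+r}/(s^k (d)_{k+r})$, so $c_r/c_0 \approx (s/d)^r$ decays geometrically. Thus $G$ is exactly an element of the Bose--Mesner algebra of the Johnson association scheme $J(d,k)$: one has $G = \sum_{r=0}^k c_r A_r$, where $A_r$ is the ``differ in $r$ elements'' adjacency operator on $k$-subsets.

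The hard part, and the main obstacle, is bounding the trace norm of the off-diagonal remainder $F = \sum_{r\ge1} c_r A_r = G - c_0\,\Pi_{\mathrm{dist}}$. The naive estimate $\|F\|_1 \le \sqrt{\binom dk}\,\|F\|_2$ is too weak: it loses a factor $\sqrt{d/k}$ and yields only $s\sqrt{k/d}$ rather than the target. Instead I would exploit that all the $A_r$ commute and are simultaneously diagonalized by the Johnson scheme's eigenspaces $V_0,\dots,V_k$ (in the regime $k\le d/2$ relevant here), with $\dim V_j = \binom dj - \binom d{j-1}$ and eigenvalues given by the Eberlein polynomials $E_r(j)$. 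Then $\|F\|_1 = \sum_{j=0}^{k} (\dim V_j)\,\bigl|\sum_{r\ge1} c_r\,E_r(j)\bigr|$, and the remaining work is to bound this alternating sum using the geometric decay $c_r \approx c_0 (s/d)^r$ together with the known magnitude of the Eberlein polynomials; carrying this out is expected to give the $O(sk/d)$ contribution. Collecting the three reductions then yields the stated bound $O(k^2/d + k/\sqrt s + sk/d)$.
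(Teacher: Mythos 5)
Your proposal follows essentially the same route as the paper: project onto the collision-free part of the symmetric subspace (costing $O(k^2/d)$ on the Haar side and $O(k/\sqrt{s})$ on the subset side), observe that the compressed subset average lies in the Bose--Mesner algebra of the Johnson scheme with coefficients decaying like $(s/d)^r$, and control the off-diagonal remainder via the known spectra of the generalized Johnson graphs. The preliminary reductions are fine (your $\|B\|_1 \le \sqrt{\trace(G)\trace(C)}$ bound for the off-diagonal block of a PSD matrix is a legitimate alternative to the paper's fidelity argument for Proposition 3.3, and the $O(k^2/s)$ diagonal mismatch is indeed lower order since the bound is vacuous unless $k \le \sqrt{s}$).

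However, the one step that carries all of the technical weight is exactly the step you defer: bounding $\|F\|_1 = \sum_j (\dim V_j)\bigl|\sum_{r\ge 1} c_r E_r(j)\bigr|$. Saying this ``is expected to give the $O(sk/d)$ contribution'' is not a proof; the geometric decay of $c_r$ alone does not suffice, because the eigenspace dimensions $\binom{d}{j}-\binom{d}{j-1}$ grow like $d^j$ and one must verify that the Eberlein eigenvalues decay fast enough in $j$ to compensate. The paper does this work in its appendix: writing $\cD_t$ for the adjacency matrix of the graph on $k$-sets with intersection $t$ (your $A_r$ with $r=k-t$), it shows $\|\cD_t\|_1 \lesssim \binom{d-k}{k-t}\binom{d}{t}2^{k-t}$ for $k=O(\sqrt d)$ by splitting the eigenvalue estimates into the ranges $j\le t$ and $j>t$, and then the weighted sum telescopes into $\sum_{t=0}^{k-1}\binom{k}{t}(2s/d)^{k-t} = (1+2s/d)^k - 1 = O(sk/d)$. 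Until you carry out an estimate of this kind on the Eberlein polynomials, your argument has a genuine gap precisely at the theorem's main inequality.
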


Consider a  $n$-qubit system, represented by a $2^n$ dimensional Hilbert space.
For any function $t(n)=\omega(\poly(n))$ and $t(n) \le s \le 2^n/t(n)$ and $k=\poly(n)$, the distance between $k$ copies of a Haar random state and $k$ copies of a random subset state of size $s$ is negligible as per the above theorem. This range for the subset's size is tight, as otherwise  efficient distinguishers exist between copies of a Haar random state and a random subset state.

An immediate corollary of the above theorem is the following:
\begin{corollary}[Pseudorandom States]\label{cor:sub-state-are-prs-informal} 
    Let $\{\PRP_k\}_{k\in \cK}$ be a quantum-secure family of pseudorandom permutations. Then the family of states 
    $$\left\{\frac{1} {\sqrt{s}}\sum_{x\in {[s]}} \ket{\PRP_k(x)} \right\}_{k\in \cK} $$ is a PRS on $n$ qubits for $t(n) \le s \le 2^n/t(n)$ and any $t(n)=\omega(\poly(n))$.
\end{corollary}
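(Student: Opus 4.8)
The plan is to establish the two defining properties of a PRS --- efficient generation and indistinguishability from Haar random states under polynomially many copies --- via a three-hybrid argument, with Theorem~\ref{thm:subset-state-PRS} supplying the information-theoretic core. Throughout, fix a number of copies $m=\poly(n)$ and write $d=2^n$.

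First, efficient generation. The key observation is that $\frac{1}{\sqrt s}\sum_{x\in[s]}\ket{\PRP_k(x)}$ is precisely the subset state $\phi_{S_k}$ for $S_k=\PRP_k([s])$, an $s$-element subset of $[d]$ since $\PRP_k$ is a permutation. To prepare it, I would start from the uniform superposition $\frac{1}{\sqrt s}\sum_{x\in[s]}\ket x$, apply the permutation unitary $\ket x\mapsto\ket{\PRP_k(x)}$ coherently, and discard the ancilla. Since $\PRP_k$ is efficiently computable and invertible, this unitary --- together with the uncomputation of the input register using $\PRP_k^{-1}$ --- is implementable in $\poly(n)$ time. (When $s$ is not a power of two, preparing the uniform superposition over $[s]$ is a standard, efficiently solvable subroutine.)

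Next, indistinguishability via three hybrids: Hybrid~0 is $\Exp_{k\in\cK}\,\phi_{S_k}^{\otimes m}$ (the actual ensemble, using the PRP); Hybrid~1 replaces $\PRP_k$ by a truly random permutation $\pi$; Hybrid~2 is $\int\psi^{\otimes m}\,d\mu(\psi)$, the Haar ensemble. For the step Hybrid~0 to Hybrid~1, I would argue by reduction: any efficient distinguisher with advantage $\varepsilon$ yields a quantum distinguisher against the PRP. Given oracle access to either $\PRP_k$ or $\pi$, the reduction prepares $m$ copies of the corresponding subset state by the procedure above --- each copy costing $O(1)$ coherent queries to the permutation and its inverse, hence $O(m)=\poly(n)$ queries in total --- and runs the distinguisher on them. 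Thus Hybrids~0 and~1 are computationally indistinguishable by the security of $\{\PRP_k\}$. For the step Hybrid~1 to Hybrid~2, note that under a uniform random permutation the image $\pi([s])$ is exactly a uniformly random $s$-subset, so Hybrid~1 equals $\Exp_{S\subseteq[d],|S|=s}\phi_S^{\otimes m}$; applying Theorem~\ref{thm:subset-state-PRS} with its copy parameter set to $m$ bounds the trace distance from Hybrid~2 by $O(m^2/d+m/\sqrt s+sm/d)$.

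Finally I would verify this bound is negligible in the stated regime: with $d=2^n$, $m=\poly(n)$, and $t(n)\le s\le 2^n/t(n)$ for $t(n)=\omega(\poly(n))$, each of $m^2/d$, $m/\sqrt s\le m/\sqrt{t(n)}$, and $sm/d\le m/t(n)$ is $\negl(n)$. Chaining the three hybrids by the triangle inequality for distinguishing advantage then yields computational indistinguishability of the ensemble from Haar random states, completing the proof. The main obstacle is not this calculation, which is routine once Theorem~\ref{thm:subset-state-PRS} is available, but the reduction in the Hybrid~0--1 step: one must ensure the subset states are produced coherently from mere oracle access to the permutation --- in particular that uncomputing the input register requires only $\PRP_k^{-1}$, which a quantum-secure PRP supplies --- so that a state distinguisher converts cleanly into a query-bounded PRP distinguisher.
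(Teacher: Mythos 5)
Your proposal is correct and follows essentially the same route as the paper: the same preparation procedure (superposition over $[s]$, coherent application of $\PRP_k$, uncomputation via $\PRP_k^{-1}$), followed by the standard hybrid argument that the paper compresses into the single sentence ``the pseudorandomness property follows from the definition of quantum-secure pseudorandom permutations and Theorem~\ref{thm:subset-state-PRS}.'' Your write-up merely makes explicit the reduction and the parameter check that the paper leaves implicit.
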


The state $\sum_{x\in {[s]}}  \ket{\PRP_k(x)} / \sqrt{s}$  can be efficiently prepared: First, compute the superposition $\sum_{x\in {[s]}} \ket{x} / \sqrt{s}$; then compute $\sum_{x\in {[s]}} \ket{x, \PRP_k(x)} / \sqrt{s}$ with the classical circuit of $\PRP_k$; finally uncompute the first register by running the circuit  $\PRP^{-1}_k$.
The pseudorandomness property of the above PRS construction follows  from
the definition of quantum-secure pseudorandom permutations (see in Section~\ref{sec:prs}) and Theorem~\ref{thm:subset-state-PRS}.

Since a subset state with respect to a small subset has small entanglement entropy across all cut, we also have
\begin{corollary}[Pseudoentangled States]
    Let $\PRP_k:[2^n]\to[2^n]$ be a quantum-secure family of pseudorandom permutations. For any $h(n)= \omega(\log n)$ and $h(n)=n-\omega(\log n)$, we have the following $h(n)$-pseudoentangled state from subset state of size $s=2^{h(n)}$, $$\left\{\frac{1}{\sqrt{s}}\sum_{x\in {[s]}} \ket{\PRP_k(x)}\right\}_{k\in\cK}.$$
\end{corollary}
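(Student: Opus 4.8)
The plan is to read off the two ingredients demanded by the definition of an $h(n)$-pseudoentangled state and supply each from results already in hand. Recall that a family is $h(n)$-pseudoentangled if it is a PRS \emph{and} every state in the family has entanglement entropy $O(h(n))$ across all cuts. So I would split the argument into (i) showing the family is a PRS, which reduces to checking that $s = 2^{h(n)}$ lands in the admissible range of the Pseudorandom States corollary, and (ii) showing the uniform entanglement bound, which follows from the elementary fact that a subset state on a set of size $s$ has Schmidt rank at most $s$ across any cut.

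First I would verify the parameter constraints for (i). The Pseudorandom States corollary requires $t(n) \le s \le 2^n/t(n)$ for some $t(n) = \omega(\poly(n))$. With $s = 2^{h(n)}$, the hypothesis $h(n) = \omega(\log n)$ yields $s = 2^{\omega(\log n)} = \omega(\poly(n))$, while $h(n) = n - \omega(\log n)$ yields $2^n/s = 2^{\,n - h(n)} = 2^{\omega(\log n)} = \omega(\poly(n))$. Taking $t(n) = \min\{2^{h(n)}, 2^{\,n-h(n)}\}$, which is itself $\omega(\poly(n))$, we get $t(n) \le s \le 2^n/t(n)$, so the family is a PRS by that corollary.

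Next I would establish the entanglement bound (ii). Fix any bipartition of the $n$ qubits into registers $A$ and $B$, and write each computational-basis index as $\ket i = \ket{i_A}\ket{i_B}$. For a subset state $\ket{\phi_S}$ the reduced density matrix $\rho_A = \trace_B \ketbra{\phi_S}{\phi_S}$ is supported on the span of $\{\ket{i_A} : i \in S\}$, a subspace of dimension at most $|S| = s$; hence $\rho_A$ has rank at most $s$ and von Neumann entropy at most $\log s = h(n)$. Since every key $k$ yields the set $\PRP_k([s])$ of the same size $s$, this bound holds for every member of the family and every cut. Combining (i) and (ii) gives exactly a PRS whose states are all $O(h(n))$-entangled across all cuts, which is the definition of an $h(n)$-pseudoentangled state.

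There is essentially no deep obstacle here: the content is the parameter bookkeeping of the previous paragraph together with the Schmidt-rank estimate. The one point I would be careful about is that the entropy bound must be \emph{uniform} over the family rather than merely in expectation; this is automatic, since the Schmidt-rank argument depends only on $|S| = s$ and not on the particular subset selected by the permutation.
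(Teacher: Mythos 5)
Your proposal is correct and follows essentially the same route the paper takes: combine the PRS corollary (after checking that $s = 2^{h(n)}$ lies in the admissible range) with the Schmidt-rank bound showing a subset state of size $s$ has entanglement entropy at most $\log s$ across every cut. The paper leaves the parameter bookkeeping implicit, whereas you spell it out, but the substance is identical.
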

For a PRS, it is easy to see that if for some cut the entanglement entropy of a state $\ket\phi$ is $O(\log n)$, then $\ket\phi$ can be distinguished from Haar random states with polynomially many copies using swap test. 

\subsection*{Contribution} This is the first proved PRS construction that requires only a single phase, showing that varying phases is unnecessary for state pseudorandomness. This is especially interesting when contrasting with the recent result by \cite{haug2023pseudorandom} that shows pseudorandom unitaries require imaginary parts. 
Our main insight lies in making the connection to the study of graph spectra. In our analysis of the trace distance in Theorem~\ref{thm:subset-state-PRS}, we identify  the matrix as a weighted sum of adjacency matrices corresponding to generalized Johnson graphs, of which the spectra are well understood. The proof then is straightforward, relying on in addition only the small collision probability of sampling polynomially many elements from a superpolynomial-size set.

\subsection*{Concurrent work}
Concurrently and independently, Giurgica-Tiron and Bouland proved a similar result \cite{bouland23subset}. Both works build upon related results about Johnson graphs and the Johnson scheme, and \cite{bouland23subset} mention the connection in their discussion. While we find our proof to be simpler to understand, their bound is slightly stronger: $O\left(sk/d + k^2/s\right)$. Additionally, it was brought to our attention that Fermi Ma also proved a similar result concurrently and independently. 
%\nnote{Added this last sentence}.

\section{Preliminaries}
\subsection{Notations}

Given any matrix $M\in \C^{n\times n}$ denote by $\|M\|_1$ the trace norm, which is the sum of the singular values of $M$.
We write $x\lesssim y$ to denote that there is a small constant $c\ge 1$, such that $x\le c y$. 
For any set $S$, let
$A(S,k):=\{(i_{1},i_{2},\ldots,i_{k})\in S{}^{k}:i_{j}\not=i_{j'}\text{ for }j\not=j'\}.$ We also adopt the notation $S_n$ for the symmetric group.

Let $n^{\underline{k}}:=n(n-1)\cdots(n-k+1).$ 
A simple calculation reveals that for $k=o(d)$,
\begin{equation*}
    \frac{d^{\underline k}}{ (d-k+1)^{\underline k}} -1 \le  \left(\frac{d+k-1}{d-k+1}\right)^k -1
    =
    O\left(\frac{k^2}{d}\right).
\end{equation*}
We will use this bound without referring to this calculations. Given any pure quantum state $\rho$ over systems $A,B$. The entanglement entropy over the cut $A:B$ is the von Neumann entropy of the reduced density matrix of system $A$ (or $B$), i.e., $-\trace (\rho_A \log \rho_A)$.

\subsection{PRPs and PRSs}\label{sec:prs}
Pseudorandom permutations (PRPs) are important constructions in cryptography. They are families of efficiently computable permutations that looks like truly random permutations
to polynomial-time (quantum) machines. 

\begin{definition}[Quantum-Secure Pseudorandom Permutation] Let $\cK$ be a key space, and $\cX$ the domain and range, all implicitly depending on the security parameter $\lambda$. A keyed family of permutations $\{\PRP_k\}_{k\in \cK}$ is a quantum-secure pseudorandom permutation if for any polynomial-time quantum oracle algorithm $\cA$, $\PRP_k$ with a random $k \leftarrow \cK$ is indistinguishable from a truly random permutation $P$ in the sense that
$$
    \left|
    \Pr_k [\cA^{\PRP_k, \PRP_k^{-1}} (1^\lambda)=1] - \Pr_P [\cA^{P,P^{-1}}(1^\lambda)=1]
    \right| = \negl(\lambda).
$$
In addition, $\PRP_k$ is polynomial-time computable on a classical
computer.
\end{definition}

\begin{fact} [\cite{zhandry2012construct, zhandry2016note}]
    Quantum-secure PRPs exist if quantum-secure one-way functions exist.
\end{fact}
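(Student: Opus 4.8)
The plan is to assemble the classical three-step chain from one-way functions to pseudorandom permutations, verifying at each stage that the reduction survives against quantum distinguishers, and to invoke Zhandry's analyses \cite{zhandry2012construct,zhandry2016note} for the two steps where quantum superposition queries genuinely alter the argument.

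First I would recall the intermediate primitives. A quantum-secure one-way function yields a quantum-secure pseudorandom generator (PRG) via the H{\aa}stad--Impagliazzo--Levin--Luby construction. Here the reduction is fully black-box and the PRG distinguisher receives no oracle---it merely tries to tell a pseudorandom string from a uniform one---so any quantum distinguisher is an ordinary (oracle-free) quantum algorithm; feeding it into the classical reduction produces a quantum inverter for the one-way function, and security is preserved essentially verbatim. This step requires no new quantum reasoning.

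The second step goes from a PRG to a quantum-secure pseudorandom function (PRF) through the Goldreich--Goldwasser--Micali (GGM) tree. The subtlety, and the heart of Zhandry's \cite{zhandry2012construct} result, is that a quantum adversary may query the function on a superposition of exponentially many inputs at once, so the classical hybrid argument---which isolates one of the polynomially many \emph{distinct} queries---no longer applies directly. I would invoke Zhandry's technique of substituting the truly random function inside each hybrid by a \emph{small-range distribution}, a function whose range has only polynomially many values, which a quantum algorithm making $q$ queries cannot distinguish from uniform beyond advantage $O(q^3/r)$ for range size $r$. This lets the GGM hybrids be pushed through the tree level by level, upgrading the PRG's security into a PRF secure against superposition queries.

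The final step converts a quantum-secure PRF into a quantum-secure (strong) PRP via the Luby--Rackoff Feistel network with a constant number of rounds (classically four rounds give a strong PRP secure under inverse queries), replacing each round function by the PRF; this is Zhandry's \cite{zhandry2016note} result. The main obstacle here is the strongest one: in our definition the distinguisher queries \emph{both} $\PRP_k$ and $\PRP_k^{-1}$ in superposition, so one must argue that a constant-round Feistel network built from quantum-secure round functions is indistinguishable from a random permutation even under two-sided superposition access. Zhandry handles this by again bounding the advantage a $q$-query quantum algorithm can accrue, showing the Feistel rounds decorrelate inputs enough that the simulated and ideal worlds differ by $\negl(\lambda)$. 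Composing the three reductions---each incurring only negligible loss, and each producing an efficient classical circuit for the resulting permutation, so that $\PRP_k$ remains polynomial-time computable and classically invertible as required---yields the claimed family $\{\PRP_k\}_{k\in\cK}$, completing the proof.
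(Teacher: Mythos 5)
Your first two steps are sound and match the literature the paper cites: OWF to PRG via H{\aa}stad--Impagliazzo--Levin--Luby carries over to quantum adversaries because the distinguisher is oracle-free, and PRG to quantum-secure PRF via GGM is exactly the content of \cite{zhandry2012construct}, with the small-range-distribution bound $O(q^3/r)$ you quote being the correct technical core. (The paper itself offers no proof of this Fact --- it is stated purely by citation --- so the comparison here is against what those references actually establish.)

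The third step, however, contains a genuine error: \cite{zhandry2016note} does \emph{not} prove quantum security of constant-round Luby--Rackoff, and the claim you attribute to it is false. Three-round Feistel with random round functions is distinguishable from a random permutation by a quantum adversary making superposition queries (Kuwakado--Morii, via Simon's algorithm), and four-round Feistel is likewise broken when the adversary has superposition access to both the permutation and its inverse, which is precisely the two-sided access your (and the paper's) PRP definition grants. So the statement ``a constant-round Feistel network built from quantum-secure round functions is a quantum-secure strong PRP'' is not merely unproven --- it is contradicted by known attacks, and the hand-wave that ``the Feistel rounds decorrelate inputs enough'' cannot be repaired. What Zhandry actually does in \cite{zhandry2016note} is route around Feistel entirely: he instantiates constructions from format-preserving encryption (card shuffles such as swap-or-not) with a quantum-secure PRF. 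The reason these work is that their classical security proofs give \emph{full security} --- statistical closeness to a random permutation even against adversaries that may query essentially the entire domain --- and indistinguishability at the level of whole truth tables transfers to quantum query adversaries, whereas constant-round Feistel only achieves security up to a polynomial (in $2^n$) number of classical queries, which is exactly the kind of guarantee that Simon-type attacks exploit. Replacing your final paragraph with this card-shuffle argument would make the chain correct.
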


The notion of pseudorandom quantum states is introduced in \cite{ji2018pseudorandom}. We restate it here.

\begin{definition}[Pseudorandom Quantum States]
    Let $\lambda$ be the security parameter. Let $\cH$ be a Hilbert space and $\cK$ the key space, both parameterized by $\lambda$. A keyed family of quantum states $\{\ket{\phi_k}\in \cH\}_{k\in \cK}$ is pseudorandom if the following two conditions hold:
    \begin{enumerate}
        \item (Efficient generation). There is a polynomial-time quantum algorithm $G$ that generates state $\ket{\phi_k}$ on input $k$. That is, for all $k \in \cK$, $G(k)=\ket{\phi_k}$.
        \item (Pseudorandomness). Any polynomially many copies of $\ket{\phi_k}$ with the same random $k \in \cK$ are computationally indistinguishable from the same number of copies of a Haar random state. More precisely, for any  polynomial-time quantum algorithm $\cA$ and any $m\in \poly(\lambda)$,
        $$
        \left|
        \Pr_{k\leftarrow \cK} [\cA^{}(\ket{\phi_k}^{\otimes m})=1] -
        \Pr_{\ket{\psi}\leftarrow \mu} [\cA^{}(\ket{\psi}^{\otimes m})=1]
        \right|=\negl(\lambda)
        $$
        where $\mu$ is the Haar measure on $\cH$.
    \end{enumerate}
\end{definition}

\section{A Random Subset State is Indistinguishable from A Haar Random State}
In this section, we prove Theorem~\ref{thm:subset-state-PRS}. The proof involves three simple propositions.
First, let's look at the Haar random state. 
A well-known fact by representation theory gives an explicit formula
for the mixture of Haar random states, $\Psi = \int \psi^{\otimes k} d \mu $, where $\mu$ is the Haar measure. For a detailed proof, see for example \cite{harrow2013church}.
\begin{fact}
\begin{align*}
\int \psi^{\otimes k} d \mu & =\binom{d+k-1}{k}^{-1}\cdot\frac{1}{k!}\sum_{\pi\in S_{k}}\sum_{\vec{i}\in[d]^{k}}|\vec{i}\rangle\langle\pi(\vec{i})|.
\end{align*}
\end{fact}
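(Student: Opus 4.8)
The plan is to exploit the unitary invariance of the Haar measure together with Schur--Weyl duality, rather than computing the integral entrywise. Write $\Psi = \int \psi^{\otimes k}\, d\mu(\psi)$, an operator on $(\C^d)^{\otimes k}$. My first step is to observe that $\Psi$ commutes with $U^{\otimes k}$ for every unitary $U$: since $U\ket\psi$ is again Haar distributed, the change of variables $\psi \mapsto U\psi U^{\dagger}$ gives $U^{\otimes k}\,\Psi\,(U^{\dagger})^{\otimes k} = \int (U\psi U^{\dagger})^{\otimes k}\, d\mu = \Psi$.

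The second step is structural. Each $\psi^{\otimes k}$ is a symmetric tensor, hence supported on the symmetric subspace $\mathrm{Sym}^k(\C^d)$, and this property survives integration, so $\Pi_{\mathrm{sym}}\,\Psi\,\Pi_{\mathrm{sym}} = \Psi$, where $\Pi_{\mathrm{sym}}$ is the orthogonal projector onto $\mathrm{Sym}^k(\C^d)$. The key representation-theoretic input is that $\mathrm{Sym}^k(\C^d)$ is an irreducible representation of $U(d)$ under $U \mapsto U^{\otimes k}$. Combining this with the commutation relation from the first step, Schur's lemma forces the restriction of $\Psi$ to the symmetric subspace to be a scalar, i.e.\ $\Psi = c\,\Pi_{\mathrm{sym}}$ for some $c \in \C$.

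To determine $c$ I would take traces. Since $\trace(\psi^{\otimes k}) = (\trace\psi)^k = 1$ and $\mu$ is a probability measure, $\trace\Psi = 1$; on the other hand $\trace\Pi_{\mathrm{sym}} = \dim \mathrm{Sym}^k(\C^d) = \binom{d+k-1}{k}$. Hence $c = \binom{d+k-1}{k}^{-1}$. It remains to rewrite $\Pi_{\mathrm{sym}}$ explicitly: the standard identity $\Pi_{\mathrm{sym}} = \frac{1}{k!}\sum_{\pi\in S_k} P_\pi$, where $P_\pi$ permutes the tensor factors, expands in the computational basis to $\frac{1}{k!}\sum_{\pi\in S_k}\sum_{\vec i\in[d]^k}\ket{\vec i}\bra{\pi(\vec i)}$, the reindexing $\pi\mapsto\pi^{-1}$ absorbing the choice of convention since the sum ranges over all of $S_k$. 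Substituting gives exactly the claimed formula.

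The conceptual weight of the argument sits in the representation-theoretic facts invoked in the second step --- the irreducibility of the symmetric subspace under the diagonal $U(d)$-action and the explicit form $\frac{1}{k!}\sum_{\pi} P_\pi$ of the symmetric projector --- which together constitute Schur--Weyl duality in this setting. These are standard (and are exactly why the excerpt cites \cite{harrow2013church}), so the main obstacle is invoking them correctly rather than any genuinely hard computation; an alternative, more laborious route would evaluate the Haar moments of the entries of $\ket\psi$ directly and resum, but the symmetry argument is cleaner.
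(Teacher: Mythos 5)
Your argument is correct and is precisely the standard proof the paper defers to (it cites \cite{harrow2013church} rather than proving the fact itself): Haar invariance plus irreducibility of $\mathrm{Sym}^k(\C^d)$ forces $\Psi = c\,\Pi_{\mathrm{sym}}$ by Schur's lemma, the trace fixes $c=\binom{d+k-1}{k}^{-1}$, and expanding $\Pi_{\mathrm{sym}}=\frac{1}{k!}\sum_{\pi}P_\pi$ in the computational basis gives the stated formula. All steps, including the remark that summing over all of $S_k$ absorbs the $\pi$ versus $\pi^{-1}$ convention, are sound.
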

Instead of working with $\Psi$ directly, we look at the operator $\tilde \Psi = \Pi \Psi \Pi$, where $\Pi$ is the projection onto the subspace of
\[ \Span\{\ket{\vec{i}} : \vec i \in A([d],k)\} \subseteq \cH^{\otimes k}.\]
Immediately,
\begin{align}
\tilde{\Psi}=\binom{d+k-1}{k}^{-1}\cdot\frac{1}{k!}\sum_{\pi\in S_{k}}\sum_{\vec{i}\in A([d],k)}|\vec{i}\rangle\langle\pi(\vec{i})|.
\end{align}
As long as $k$ is small, we expect that $\Psi \approx \tilde \Psi$. This is simple and known. For completeness, we present a proof.

\begin{prop}\label{prop:haar-pi-haar}
$\|\Psi -\tilde\Psi\|_{1}=O(k^{2}/d).$
\end{prop}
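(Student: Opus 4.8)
The plan is to exploit the fact that the normalized symmetric-subspace operator $\Psi$ is \emph{block diagonal} with respect to the splitting of $\cH^{\otimes k}$ induced by $\Pi$ and $\Pi^\perp := I - \Pi$. Concretely, I would first read off from the explicit formula for $\Psi$ that the two off-diagonal blocks vanish, i.e. $\Pi\Psi\Pi^\perp = \Pi^\perp\Psi\Pi = 0$. The reason is that in each rank-one term $|\vec i\rangle\langle\pi(\vec i)|$ one has $\Pi|\vec i\rangle \neq 0$ only when $\vec i \in A([d],k)$, whereas $\langle\pi(\vec i)|\Pi^\perp \neq 0$ only when $\pi(\vec i) \notin A([d],k)$; since permuting the coordinates of a tuple preserves whether its entries are distinct, these two conditions can never hold simultaneously, so every term contributes zero.

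Granting this, $\Psi = \Pi\Psi\Pi + \Pi^\perp\Psi\Pi^\perp$, and therefore $\Psi - \tilde\Psi = \Psi - \Pi\Psi\Pi = \Pi^\perp\Psi\Pi^\perp$. The right-hand side is positive semidefinite, being the conjugation of the PSD operator $\Psi$ by the projection $\Pi^\perp$, so its trace norm equals its trace. Using $\trace\Psi = 1$ together with $(\Pi^\perp)^2 = \Pi^\perp$, I then obtain $\|\Psi - \tilde\Psi\|_1 = \trace(\Pi^\perp\Psi\Pi^\perp) = \trace(\Pi^\perp\Psi) = 1 - \trace(\Pi\Psi)$.

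It remains to compute $\trace(\Pi\Psi)$ from the explicit formula. Only tuples $\vec i \in A([d],k)$ survive the projection $\Pi$, and for such a distinct tuple $\langle\pi(\vec i)|\vec i\rangle = 1$ exactly when $\pi = \mathrm{id}$; counting the $d^{\underline k}$ distinct tuples yields $\trace(\Pi\Psi) = \binom{d+k-1}{k}^{-1}\frac{1}{k!}\, d^{\underline k} = d^{\underline k}/(d+k-1)^{\underline k}$. The falling-factorial estimate recorded in the preliminaries, applied at base $m = d+k-1$, then gives $1 - d^{\underline k}/(d+k-1)^{\underline k} \le (d+k-1)^{\underline k}/d^{\underline k} - 1 = O(k^2/d)$, which is the claimed bound.

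The one genuine subtlety, and the step I expect to matter most, is the vanishing of the cross blocks. A naive gentle-measurement or Cauchy--Schwarz estimate of $\|\Pi^\perp\Psi\|_1$ only delivers $O\big(\sqrt{\trace(\Pi^\perp\Psi)}\big) = O(k/\sqrt d)$, which is quadratically too weak. It is precisely the permutation-invariance of the distinctness pattern that makes $\Psi$ block diagonal and upgrades the bound to the sharp $O(k^2/d)$; everything else is bookkeeping with the explicit formula and the falling-factorial calculation already prepared in the preliminaries.
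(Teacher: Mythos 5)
Your proof is correct and follows essentially the same route as the paper: the paper likewise writes $\Psi=\tilde\Psi+\cR$ with $\cR=(I-\Pi)\Psi(I-\Pi)$ (i.e.\ the cross blocks vanish), uses positivity to reduce the trace norm to $1-\trace(\Pi\Psi)=1-d^{\underline{k}}/(d+k-1)^{\underline{k}}$, and finishes with the same falling-factorial estimate. Your explicit justification of why $\Pi\Psi\Pi^{\perp}=0$ (permutations preserve distinctness of a tuple's entries) is the one step the paper leaves implicit, and it is argued correctly.
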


\begin{proof}
Consider the following decomposition of $\Psi:=\tilde\Psi +\cR$. Note that 
\[
\cR = (I-\Pi) \Psi (I-\Pi).
\]
It's clear that $\tilde{\Psi}$ and $\cR$ are both positive semi-definite, and
$\tilde{\Psi}\cR= 0.$ Therefore, the nonzero eigenspaces of $\Psi$ correspond to those of $\tilde \Psi$ and $\cR$, respectively. Consequently,
\begin{align*}
\left\Vert \Psi-\tilde{\Psi}\right\Vert _{1} & =1-\|\tilde{\Psi}\|_{1}=1-\frac{d^{\underline{k}}}{(d+k-1)^{\underline{k}}}\\
 & \qquad=1-\frac{d}{d+k-1}\cdot\frac{d-1}{d+k-2}\cdots\frac{d-k+1}{d}\\
 & \qquad\le O\left(\frac{k^{2}}{d}\right).\qedhere
\end{align*}
\end{proof}

%\subsection*{Random subset state.}
Next, we turn to random subset states. Let $\Phi = \Exp_{|S|=s} \phi_S ^{\otimes k}$, and consider $\Pi \Phi \Pi$, but normalized.\footnote{Although in the case of Haar random state we didn't normalize, this doesn't really matter. Our choice is for simplicity of proof.} In particular,
\[
\tilde{\Phi}=\Exp_{S:|S|=s}\left[\frac{1}{s^{\underline{k}}}\sum_{\vec{i},\vec{j}\in A(S,k),}|\vec{i}\rangle\langle\vec{j}|\right].
\]
Analogous to Proposition~\ref{prop:haar-pi-haar}, we have
\begin{prop}\label{prop:subset-pi-subset}
$\|\Phi-\tilde{\Phi}\|_{1}=O(k/\sqrt{s}).$
\end{prop}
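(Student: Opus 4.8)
The plan is to reduce the averaged statement to a per-subset estimate and then exploit that, for each fixed $S$, both operators being compared are \emph{rank one}. Fix $S$ with $|S|=s$. Since $\phi_S=\ketbra{u_S}{u_S}$ with $\ket{u_S}=\frac{1}{\sqrt s}\sum_{i\in S}\ket i$, its $k$-fold tensor power is again a pure state, $\phi_S^{\otimes k}=\ketbra{w_S}{w_S}$ where $\ket{w_S}=s^{-k/2}\sum_{\vec i\in S^k}\ket{\vec i}$. The matching summand of $\tilde\Phi$ factors the same way: $\frac{1}{s^{\underline k}}\sum_{\vec i,\vec j\in A(S,k)}\ketbra{\vec i}{\vec j}=\ketbra{v_S}{v_S}$ with $\ket{v_S}=(s^{\underline k})^{-1/2}\sum_{\vec i\in A(S,k)}\ket{\vec i}$, and one checks $\braket{v_S|v_S}=1$ so that this is a unit vector. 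Because the trace norm is convex, the average over $S$ comes out through the triangle inequality,
\[
\|\Phi-\tilde\Phi\|_1=\Big\|\Exp_{S:|S|=s}\big(\ketbra{w_S}{w_S}-\ketbra{v_S}{v_S}\big)\Big\|_1\le \Exp_{S:|S|=s}\big\|\ketbra{w_S}{w_S}-\ketbra{v_S}{v_S}\big\|_1,
\]
so it suffices to bound the trace distance between the two pure states $\ket{w_S}$ and $\ket{v_S}$ uniformly in $S$.

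Next I would compute the overlap. Only the $s^{\underline k}$ terms of $\ket{w_S}$ indexed by tuples with pairwise-distinct coordinates survive the inner product against $\ket{v_S}$, so $\braket{w_S|v_S}=(s^k\, s^{\underline k})^{-1/2}\,|A(S,k)|=\sqrt{s^{\underline k}/s^k}$. Invoking the standard identity $\big\|\ketbra{w}{w}-\ketbra{v}{v}\big\|_1=2\sqrt{1-|\braket{w|v}|^2}$ for unit vectors gives $\big\|\ketbra{w_S}{w_S}-\ketbra{v_S}{v_S}\big\|_1=2\sqrt{1-s^{\underline k}/s^k}$. The remaining estimate is elementary: $s^{\underline k}/s^k=\prod_{j=0}^{k-1}(1-j/s)\ge 1-\sum_{j=0}^{k-1} j/s=1-\binom{k}{2}/s$, so $1-s^{\underline k}/s^k=O(k^2/s)$, and taking the square root yields the per-$S$ bound $O(k/\sqrt s)$. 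Since this is independent of $S$, the displayed inequality above closes the argument.

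The step to watch — and the reason I would not simply transplant the orthogonal-decomposition argument of Proposition~\ref{prop:haar-pi-haar} — is that for a single subset $S$ the operator $\phi_S^{\otimes k}$ has \emph{nonzero} cross terms between the range of $\Pi$ and its complement: for $\vec i\in A(S,k)$ and $\vec j\in S^k\setminus A(S,k)$ the term $\ketbra{\vec i}{\vec j}$ does not vanish, so $\phi_S^{\otimes k}$ is not the orthogonal sum of the positive semidefinite pieces $\Pi\phi_S^{\otimes k}\Pi$ and $(I-\Pi)\phi_S^{\otimes k}(I-\Pi)$, unlike the permutation-symmetric Haar operator $\Psi$. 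Recognizing that both $\phi_S^{\otimes k}$ and its renormalized projection remain rank one sidesteps this obstacle completely, and the only quantitative input needed is the collision-type bound $1-s^{\underline k}/s^k=O(k^2/s)$, whose square root is exactly the target $O(k/\sqrt s)$.
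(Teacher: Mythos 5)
Your proposal is correct and follows essentially the same route as the paper: the paper's proof consists of exactly the convexity step you perform, followed by an asserted per-set bound of $O(k/\sqrt s)$, which you justify in the natural way (both operators are rank one, overlap $\sqrt{s^{\underline k}/s^k}$, pure-state trace-distance formula, collision bound $1-s^{\underline k}/s^k=O(k^2/s)$). Your filled-in details, including the observation that the orthogonal-decomposition trick from Proposition~\ref{prop:haar-pi-haar} does not transfer, are accurate and complete what the paper leaves implicit.
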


\begin{proof} Let $\gamma$ be the uniform distribution over subset $S\subseteq [d]$  of size $s$,
\begin{align*} 
 & \left\Vert \int_S \left(\frac{1}{s^{k}}\sum_{\vec{i},\vec{j}\in S^{k}}|\vec{i}\rangle\langle\vec{j}|-\frac{1}{s^{\underline{k}}}\sum_{\vec{i},\vec{j}\in A(S,k)}|\vec{i}\rangle\langle\vec{j}|\right) d\gamma \right\Vert _{1}\\
 & \qquad\le\int_{S}\left\Vert \frac{1}{s^{k}}\sum_{\vec{i},\vec{j}\in S^{k}}|\vec{i}\rangle\langle\vec{j}|-\frac{1}{s^{\underline{k}}}\sum_{\vec{i},\vec{j}\in A(S,k)}|\vec{i}\rangle\langle\vec{j}|\right\Vert _{1} d\gamma\le O\left(\frac{k}{\sqrt{s}}\right).\qedhere
\end{align*}
\end{proof}
%
%\subsection*{Haar random state vs. random subset state.}
All that is left to do is to show that $\|\tilde{\Phi}-\tilde{\Psi}\|_{1}$ is small.
Fix any $\vec{i},\vec{j}\in A([d],k)$, and let $\ell=\ell(\vec{i},\vec{j})$
be the total number of distinct elements in the union of the elements of the vectors  $\vec{i},\vec{j}$. Then the $(\vec{i},\vec{j})$'th entry of $\tilde{\Phi}$ is
\begin{align}
\tilde{\Phi}(\vec{i},\vec{j}) & =\frac{1}{s^{\underline{k}}}\Pr_{|S|=s}[\vec{i},\vec{j}\in A(S,k)]=\frac{1}{s^{\underline{k}}}\frac{\binom{d-\ell}{s-\ell}}{\binom{d}{s}}=\frac{s^{\underline{\ell}}}{s^{\underline{k}}\cdot d^{\underline{\ell}}}.\label{eq:entry-value}
\end{align}

\begin{prop}\label{prop:haar-vs-subset-appr}
For any $k\ll s\le d,$ it holds that \[\|\tilde{\Phi}-\tilde{\Psi}\|_{1}=O\left(\frac{sk}{d}\right).\]
\end{prop}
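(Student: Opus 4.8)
The plan is to recognize $\tilde\Phi-\tilde\Psi$ as a linear combination of adjacency matrices of generalized Johnson graphs on the index set $A([d],k)$, and to bound the trace norm of each such matrix by splitting it into positive semidefinite pieces. For $0\le m\le k$, let $\tilde M_m$ be the matrix on $A([d],k)$ with $\tilde M_m(\vec i,\vec j)=1$ exactly when the underlying sets $I,J$ of $\vec i,\vec j$ meet in $m$ points (i.e. $\ell(\vec i,\vec j)=2k-m$), and $0$ otherwise. Since $\tilde\Psi$ is supported on pairs with the same underlying set, while \eqref{eq:entry-value} shows $\tilde\Phi(\vec i,\vec j)$ depends only on $\ell$, both operators lie in the span of the $\tilde M_m$, and
\[
\tilde\Phi-\tilde\Psi=\Bigl(\tfrac1{d^{\underline k}}-\tfrac1{(d+k-1)^{\underline k}}\Bigr)\tilde M_k+\sum_{m=0}^{k-1}\frac{s^{\underline{2k-m}}}{s^{\underline k}\,d^{\underline{2k-m}}}\,\tilde M_m.
\]

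To bound $\|\tilde M_m\|_1$ I would introduce, for $0\le j\le k$, the matrix $N_j$ with $N_j(\vec i,\vec j)=\binom{|I\cap J|}{j}$. Writing $V_j$ for the inclusion matrix $V_j(\vec i,T)=\mathbf 1[T\subseteq I]$ indexed by $j$-subsets $T$, one checks directly that $N_j=V_jV_j^{\dagger}\succeq 0$, so its trace norm is just its trace: $\|N_j\|_1=\trace N_j=\sum_{\vec i}\binom{k}{j}=d^{\underline k}\binom kj$. Because $\binom{|I\cap J|}{j}=\sum_{m\ge j}\binom mj\,\mathbf 1[\,|I\cap J|=m\,]$, we have the triangular relation $N_j=\sum_{m\ge j}\binom mj\,\tilde M_m$, which inverts by the standard binomial identity to $\tilde M_m=\sum_{j=m}^k(-1)^{j-m}\binom jm N_j$. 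The triangle inequality together with $\sum_{j=m}^k\binom jm\binom kj=\binom km 2^{k-m}$ then yields the key estimate
\[
\|\tilde M_m\|_1\le\sum_{j=m}^k\binom jm\,\|N_j\|_1=d^{\underline k}\,\binom km\,2^{k-m}.
\]

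Finally I would combine these bounds. The $m=k$ term contributes $\bigl(\tfrac1{d^{\underline k}}-\tfrac1{(d+k-1)^{\underline k}}\bigr)d^{\underline k}=1-\tfrac{d^{\underline k}}{(d+k-1)^{\underline k}}=O(k^2/d)$ by the preliminary falling-factorial estimate. For $m<k$ put $r=k-m\ge 1$; using $s^{\underline{2k-m}}/s^{\underline k}=(s-k)^{\underline r}\le s^r$, $\;d^{\underline k}/d^{\underline{2k-m}}=1/(d-k)^{\underline r}\le (d-2k)^{-r}$, and $\binom km=\binom kr\le k^r/r!$, each term is at most $\tfrac1{r!}\bigl(\tfrac{2sk}{d-2k}\bigr)^r$, so $\sum_{r\ge 1}$ of them is at most $e^{2sk/(d-2k)}-1=O(sk/d)$ whenever $sk=O(d)$; when $sk\gtrsim d$ the claim is trivial since $\|\tilde\Phi-\tilde\Psi\|_1\le\trace\tilde\Phi+\trace\tilde\Psi\le 2$. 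As $k\le s$ forces $k^2/d\le sk/d$, this gives $\|\tilde\Phi-\tilde\Psi\|_1=O(sk/d)$.

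I expect the main obstacle to be obtaining a \emph{sharp} trace-norm bound for the $\tilde M_m$: crude estimates via rank times operator norm, or via Cauchy--Schwarz and the Frobenius norm, only give $O(s\sqrt{k/d})$, which is too weak by a factor of $\sqrt{k/d}$. The decomposition into the PSD Gram matrices $N_j$ — equivalently, the fact from the Johnson scheme that large eigenvalues occur only with small multiplicity — is exactly what makes the bound tight; once that structure is in place, the falling-factorial bookkeeping and the convergence of the resulting series are routine.
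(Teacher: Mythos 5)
Your proof is correct, and while it shares the paper's overall architecture---decompose $\tilde\Phi-\tilde\Psi$ over intersection-size classes (the generalized Johnson graph adjacency matrices), bound each class's trace norm by roughly $\binom{k}{m}2^{k-m}$ times the ambient count, and sum the resulting binomial series to get $O(sk/d)$ when $sk=O(d)$---it replaces the paper's key lemma by a genuinely different argument. The paper bounds $\|\cD_t\|_1$ by invoking Delsarte's explicit eigenvalue formulas for the Johnson scheme and carrying out the multiplicity-weighted eigenvalue bookkeeping in the appendix, which requires the simplification $k=O(\sqrt d)$. You instead write each $\tilde M_m$ as an alternating combination $\sum_{j\ge m}(-1)^{j-m}\binom{j}{m}N_j$ of the Gram matrices $N_j=V_jV_j^{\dagger}$ of inclusion matrices, so that $\|N_j\|_1=\trace N_j=d^{\underline k}\binom{k}{j}$ for free and the identity $\sum_{j\ge m}\binom{j}{m}\binom{k}{j}=\binom{k}{m}2^{k-m}$ yields $\|\tilde M_m\|_1\le d^{\underline k}\binom{k}{m}2^{k-m}$, which matches the paper's bound up to an $e^{O(k^2/d)}$ factor. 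This is self-contained, avoids citing the Johnson-scheme spectra, and needs no a priori restriction on $k$ versus $\sqrt d$; it is in effect a clean proof of the paper's Fact on $\|\cD_t\|_1$. Two further cosmetic differences: you keep the $m=k$ (diagonal-block) term with its small coefficient $\tfrac{1}{d^{\underline k}}-\tfrac{1}{(d+k-1)^{\underline k}}$ rather than subtracting a rescaled $\tilde\Psi$ as the paper does, and you work directly on $A([d],k)$ rather than factoring $\cD=\tilde{\cD}\otimes J$; both choices are equivalent. You are also right to note explicitly (as the paper leaves implicit) that the regime $sk=\Omega(d)$ is trivial because $\tilde\Phi$ and $\tilde\Psi$ are PSD with trace at most $1$.
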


\begin{proof}
Let 
\[
\cD=\tilde{\Phi}-\frac{(d+k-1)^{\underline{k}}}{d^{\underline{k}}}\tilde{\Psi}.
\]
This factor is chosen so that $\cD(\vec{i},\vec{j})=0$ for any $\vec{i}$
and $\vec{j}$ such that $\vec{j}=\pi(\vec{i})$ for some permutation
$\pi.$ By triangle inequality,
\[
\|\tilde{\Phi}-\tilde{\Psi}\|_{1} \le 
    \left\|
    \cD
%        \tilde{\Phi}-\frac{(d+k-1)^{\underline{k}}}{d^{\underline{k}}}\tilde{\Psi}
    \right\|_1
    + \left\|
            \tilde{\Psi}-\frac{(d+k-1)^{\underline{k}}}{d^{\underline{k}}}\tilde{\Psi}
    \right\|_1,
\]
where the second term is bounded by $O(k^2/d)$.

We turn to $\cD$. Let $\vec{j}\sim\vec{i}$ to denote that $\vec{j}$ is a permutation
of $\vec{i}$. Note that for any $\vec{i}\sim\vec{j}$, $\cD(\vec{i},\cdot)=\cD(\vec{j},\cdot)$,
and similarly $\cD(\cdot,\vec{i})=\cD(\cdot,\vec{j}).$ Therefore
$\cD=\tilde{\cD}\otimes J$
where $J\in\C^{k!\times k!}$ is the all 1 matrix and $\tilde{\cD}\in\C^{\binom{[d]}{k}\times\binom{[d]}{k}},$
s.t. for any $A,B\in\binom{[d]}{k},$ 
\begin{align*}
\tilde{\cD}(A,B)=\cD(\vec{i},\vec{j}), &  & \vec{i},\vec{j}\text{ contain }A\text{ and }B,\text{ respectively.}
\end{align*}
Next, decompose $\tilde{\cD}:=\sum_{t=0}^{k-1}\alpha_{t}\cD_{t},$ where in view of (\ref{eq:entry-value}),
\begin{align*}
 & \alpha_{t}=\frac{(s-k) \cdots (s-2k+t+1)}{d \cdots (d-2k+t+1)},\\
 & \cD_{t}(A,B)=\begin{cases}
1 & |A\cap B|=t,\\
0 & \text{otherwise.}
\end{cases}
\end{align*}
$\cD_{t}$ is the adjacency matrix for the well-studied generalized
Johnson graphs~\cite{delsarte1973algebraic}. In particular,
we will need the following fact (explained in Appendix).
\begin{fact}\label{fact:johnson-trace}
For any $0\le t\le k-1$, and for $k=O(\sqrt{d})$
$$
\|\cD_{t}\|_{1}\lesssim \binom{d-k}{k-t}\binom{d}{t}2^{k-t}.
$$
\end{fact}
Assisted by the above fact, we can bound $\|\cD\|_{1}$ for $sk=O(d)$ as below, 
\begin{align*}
    \|\cD\|_{1}=k! & \|\tilde{\cD}\|_{1}\le k!\sum_{t=0}^{k-1}\alpha_{t}\|\cD_{t}\|_{1}\lesssim k!\sum_{t=0}^{k-1}\frac{s^{k-t}}{d^{2k-t}}\cdot\frac{d^{k}}{t!(k-t)!}\cdot2^{k-t}\\
     & =\sum_{t=0}^{k-1}\left(\frac{2s}{d}\right)^{k-t}\binom{k}{t}= \left(1+\frac{2s}{d}\right)^{k}-1\lesssim O\left(\frac{2sk}{d}\right). \qedhere
\end{align*}
\end{proof}

Theorem~\ref{thm:subset-state-PRS} follows by triangle inequality on Propositions ~\ref{prop:haar-pi-haar}-\ref{prop:haar-vs-subset-appr}.

\section{Tightness of the set size}
In this section, we discuss the optimality of Theorem~\ref{thm:subset-state-PRS} in terms of subset size.
\begin{lemma}
    For any polynomial $p$, there are polynomial-time quantum algorithms that distinguish a Haar random state with any subset state of size $s$ provided polynomially many copies for either of the following two cases
    \begin{enumerate}
        \item\label{enu:small-set} $s = p(n)$,
        \item\label{enu:large-set} $s = 2^n / p(n).$
    \end{enumerate}
\end{lemma}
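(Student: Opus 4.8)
The plan is to exhibit, for each of the two parameter regimes, an explicit polynomial-time quantum distinguisher and to bound its advantage away from the vanishing bound that Theorem~\ref{thm:subset-state-PRS} guarantees only in the intermediate range. The two cases are governed by different features of a subset state, so I would treat them separately.

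For case~\eqref{enu:small-set}, with $s = p(n)$, the key structural fact is that a subset state $\ket{\phi_S}$ has support on only $s = \poly(n)$ basis elements, whereas a Haar random state has full support with overwhelming probability. I would detect this sparsity via a \emph{collision test}: given two copies, the natural statistic is the SWAP test, but a cleaner distinguisher here measures two copies in the computational basis and checks whether the two outcomes coincide. For a subset state of size $s$, the collision probability of a single measurement is $1/s$, so two independent computational-basis measurements collide with probability $1/s = 1/p(n)$. For a Haar random state the expected collision probability is $\Theta(1/d) = \Theta(2^{-n})$, which is negligible. Repeating this $O(p(n)^2)=\poly(n)$ times and checking whether any collision occurs separates the two distributions with constant advantage. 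The main thing to verify carefully is the concentration: for the Haar case I would use the explicit formula for $\int \psi^{\otimes k}\,d\mu$ from the Fact to compute the expected diagonal mass of $\Psi$ on the collision subspace, confirming it is $O(1/d)$, while for the subset state the collision is a deterministic $1/s$ after averaging, independent of $S$.

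For case~\eqref{enu:large-set}, with $s = 2^n/p(n)$, sparsity no longer helps since the support is almost all of $[d]$; instead I would exploit that a subset state is an \emph{equal-amplitude} state on its support, so its overlap with the uniform superposition $\ket{u} = \frac{1}{\sqrt d}\sum_i \ket i$ is large and concentrated. Concretely, $|\langle u | \phi_S\rangle|^2 = s/d = 1/p(n)$ deterministically, whereas for a Haar random state $|\langle u|\psi\rangle|^2$ has mean $1/d$ and is tightly concentrated below any inverse polynomial. The distinguisher prepares $\ket u$ (which is efficient) and performs a SWAP test between $\ket u$ and the given copy; the SWAP test accepts with probability $\tfrac12(1+|\langle u|\phi_S\rangle|^2)$. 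Since the two acceptance probabilities differ by roughly $\tfrac{1}{2}(1/p(n) - 1/d)=\Omega(1/p(n))$, amplifying over $\poly(n)$ copies yields constant distinguishing advantage. The SWAP test is the standard efficient primitive here, and $\ket u$ requires only a layer of Hadamards, so the distinguisher is polynomial time.

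The main obstacle I anticipate is not constructing the distinguishers — both the collision test and the uniform-overlap SWAP test are natural — but arguing the Haar side rigorously, i.e.\ showing that the relevant statistic (collision probability in case~\eqref{enu:small-set}, overlap with $\ket u$ in case~\eqref{enu:large-set}) is negligible for a Haar random state with high probability over the choice of $\ket\psi$, and remains so after taking $\poly(n)$ copies. For both statistics the expectation is $\Theta(1/d)$, which is immediate from the moment formula, but I must also control the fluctuations; I would do this by computing the second moment via the $k=2$ case of the same moment formula and invoking Chebyshev, or more cleanly by observing that each test is itself implemented as a two-outcome measurement whose expected acceptance probability over the Haar measure is what enters the trace-distance lower bound, so that a comparison of expected acceptance probabilities already certifies that the two ensembles are distinguishable with $\poly(n)$ copies. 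This reduces the whole argument to comparing two explicit scalars, $1/s$ versus $O(1/d)$ and $s/d$ versus $O(1/d)$, both of which are $\Omega(1/\poly(n))$ gaps precisely when $s$ leaves the admissible window.
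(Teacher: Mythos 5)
Your proposal is correct and follows essentially the same route as the paper: a computational-basis collision test for small $s$ and a SWAP test against the uniform superposition $\ket{u}$ for large $s$, with the Haar side controlled by showing the relevant statistic is negligible. The only cosmetic difference is in case~(1), where the paper measures $p(n)+1$ copies and invokes pigeonhole to get a collision with probability $1$ for any subset state, whereas you repeat a two-copy collision test $\poly(n)$ times; both work, and the paper handles the Haar side by closeness to binary phase states rather than a second-moment/Chebyshev argument.
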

\begin{proof}
    (\ref{enu:small-set}) For a  subset of small size $p(n)$, measure $p(n)+1$ copies in the standard basis. Return ``random subset state'' if two outcomes are equal, and ``Haar random'' otherwise.
    By the pigeonhole principle we get some outcome twice in case we have copies of a random subset state, answering correctly with probability $1$.
    On the other hand, Haar random states are exponentially close to random binary phase states in trace distance \cite{ji2018pseudorandom,brakerski2019pseudo}. These have equal inverse exponential magnitude for each computational basis element. Thus the probability of getting the same outcome twice is exponentially small. %It follows that the distinguishing probability is exponentially close to $1$.

    (\ref{enu:large-set}) For large set size of size $2^n/p(n)$, perform the swap test on the input state and the uniform superposition of all computational basis. The fidelity between the uniform superposition and the random subset state is $1/\sqrt{p(n)}$,
    \iffalse
    $$\left(\sqrt{2^{-n}} \sum_{x\in \zo{n}} \bra{x} \right) \left( \sqrt{2^{-n} p(n)} \sum_{y \in S} \ket{y} \right) = \frac{1}{\sqrt{p(n)}}$$
    \fi 
    and therefore the swap test outputs ``equal'' with probability $\frac{1}{2}\left(1+\frac{1}{p(n)}\right)$. 
    On the other hand, the fidelity between a random binary phase state and uniform superposition is exponentially small, therefore Haar random state and uniform superposition as well. In conclusion, the  test accepts subset state with an advantage at least $1/3p(n)$ than that of Haar random state. Repeating with more copies, the advantage can be amplified to $\Omega(1)$.
\end{proof}

\section*{Acknowledgement} We would like to express our gratitude to Zvika Brakerski and Omri Shmueli for their valuable feedback on our paper.

\bibliographystyle{plain}
\bibliography{Bibliography}

\appendix

\section*{Appendix: Spectra of $\cD_{t}$}

The spectra of $\cD_{t}$ is known \cite{delsarte1973algebraic}. In particular,
fix any $0\le t\le k-1,$ there are $k+1$ distinct eigenvalues $\lambda_{0},\lambda_{1},\ldots,\lambda_{k},$
such that
\begin{align*}
 & \lambda_{0}=\binom{k}{t}\binom{d-k}{k-t},\\
 & \lambda_j = \sum_{\ell=\max\{0,j-t\}}^{\min\{j, k-t\}}(-1)^\ell\binom{j}{\ell}\binom{k-j}{k-t-\ell}\binom{d-k-j}{k-t-\ell},& j = 1,2,\ldots, k,
\end{align*}
with multiplicity
\begin{align*}
& m_{0}=1,\\
 & m_{j}=\binom{d}{j}-\binom{d}{j-1}, & j=1,2,\ldots,k.
\end{align*}
For us, we simplify $\lambda_j$ for $k=O(\sqrt d)$,
\begin{align*}
 & |\lambda_{j}|\lesssim\frac{\binom{k-j}{t-j}}{\binom{k}{t}}\lambda_{0}, & j=1,2,\ldots,t,\\
 & |\lambda_{j}|\lesssim\frac{\binom{j}{t}(k-t)!}{\binom{k}{t}(k-j)!}\cdot\frac{1}{d^{j-t}}\lambda_{0}, & j=t+1,\ldots,k.
\end{align*}
We bound $\|\cD_{t}\|_{1}$ as follows 
\begin{align*}
\|\cD_{t}\|_{1} & =\lambda_{0}+\sum_{j=1}^{k}m_{j}|\lambda_{j}|\\
 & \lesssim\lambda_{0}\left(1+\sum_{j=1}^{t}\frac{d^{\underline{j}}}{j!}\frac{\binom{k-j}{t-j}}{\binom{k}{t}}+\sum_{j=t+1}^{k}\frac{d^{\underline{t}}}{j!}\frac{\binom{j}{t}(k-t)^{\underline{j-t}}}{\binom{k}{t}}\right)\\
 & \lesssim\lambda_{0}\left(1+\frac{d^{\text{\ensuremath{\underline{t}}}}}{k^{\underline{t}}}+\sum_{j=t+1}^{k}\frac{d^{\underline{t}}}{k^{\underline{t}}}\binom{k-t}{j-t}\right)\\
 & \le\lambda_{0}\left(1+\frac{d^{\underline{t}}}{k^{\underline{t}}}2^{k-t}\right)\\
 & =\binom{k}{t}\binom{d-k}{k-t}
    \left(1+ \frac{d^{\underline{t}}}{k^{\underline{t}}}2^{k-t} \right)
    \\
 &\lesssim \binom{k}{t}\binom{d-k}{k-t} \frac{d^{\underline{t}}}{k^{\underline{t}}}2^{k-t}
 =\binom{d}{t}\binom{d-k}{k-t}2^{k-t}.
\end{align*}
This proves Fact~\ref{fact:johnson-trace}.

\end{document}